\newtheorem{defin}{Definition}
 \newenvironment{definition}{\begin{defin} \sl}{\end{defin}}
\newtheorem{theo}[defin]{Theorem}
 \newenvironment{theorem}{\begin{theo} \sl}{\end{theo}}
\newtheorem{lem}[defin]{Lemma}
 \newenvironment{lemma}{\begin{lem} \sl}{\end{lem}}
\newtheorem{coro}[defin]{Corollary}
 \newtheorem{obs}[defin]{Observation}
 \newtheorem{con}[defin]{Conjecture}
\newenvironment{proof}{\emph{Proof.}}{\hfill $\Box$\\}
\newcommand{\true}{{\sc true}\xspace}
\newcommand{\false}{{\sc false}\xspace}
\title{Geometry and Generation of a New Graph Planarity Game}
\author{Rutger Kraaijer\thanks{Dept. of Information and Computing Sciences, Utrecht University, the Netherlands} 
\and Marc van Kreveld$^*$
\and Wouter Meulemans\thanks{Dept. of Mathematics and Computer Science, Eindhoven University of
Technology, the Netherlands}
\and Andr\'e van Renssen\thanks{School of Computer Science, University of Sydney, Australia}}
\date{}
\begin{document}
\maketitle

\begin{abstract}
We introduce a new abstract graph game, \textsc{Swap Planarity}, where the goal is to reach a state without
edge intersections and a move consists of swapping the locations of two vertices
connected by an edge. We analyze this puzzle game using concepts from graph theory
and graph drawing, computational geometry, and complexity. Furthermore, we specify quality criteria for puzzle instances, and describe a method to generate high-quality instances. We also report on
experiments that show how well this generation process works.
\end{abstract}

\section{Introduction}

{\sc Planarity}~\cite{Planarity} is a popular abstract puzzle game that is widely available.
Besides being a smartphone app and having a Wikipedia page, it is also
available as a ``model'' in  Netlogo~\cite{tisue2004netlogo}.
The idea is that a tangled graph is given with
intersecting edges, and the objective is to untangle the graph by dragging
vertices to other locations as to reach a plane drawing. If the graph is planar (meaning that it can be
embedded in the plane without intersections), then the objective can always be realized, and we
never need more vertex drags than there are vertices.

Algorithmically, planarity of a graph can be tested in linear time~\cite{chiba1985linear,de1990draw,tamassia2013handbook},
and the algorithm returns an embedding of the graph in which it is drawn planar.
So for an algorithm, an instance of {\sc Planarity} is easily solvable in linear time. Minimizing the number of moves, however, is NP-hard~\cite{goaoc2009untangling,Verbitsky08}, see also~\cite{BoseDHLMW09}.

In this paper we propose several variations on the game {\sc Planarity}.
These variations essentially limit the freedom of the operations that can be done
on the drawn graph. We investigate
one of the new variations closely: \textsc{Swap Planarity}, where we can \emph{swap the locations of
two vertices} that are connected by an edge. Examples are shown in Fig.~\ref{fig:introex}. 
We show that quadratically many swaps are sometimes necessary, even if the input has just one edge crossing. We also prove that, if a planar state can be reached, quadratically many swaps are always sufficient to reach it.
We also show, however, that deciding whether such a planar state exists is NP-complete for general graphs. 
Simple graphs like trees can always be made planar by swaps, but we show that minimizing the number of swaps needed is NP-complete.

\begin{figure}[bhtb]
\centering
\includegraphics{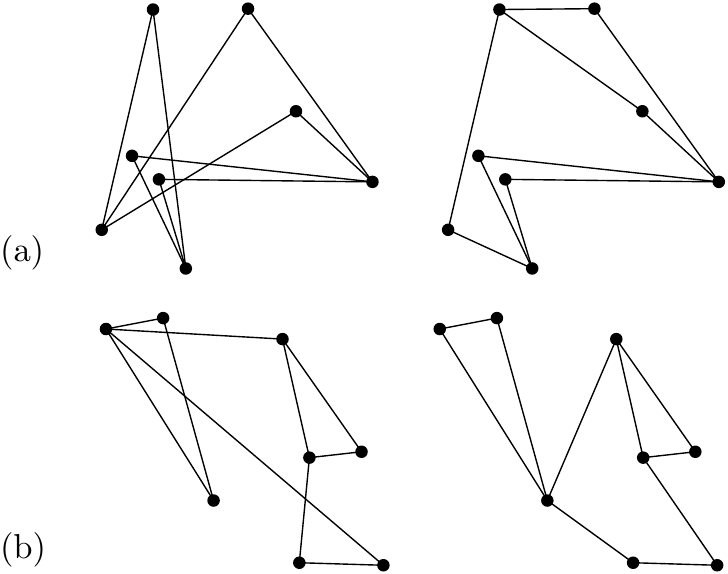}
\caption{(a) Puzzle and solution after one swap (the left, nearly vertical edge). (b) Puzzle and solution after two swaps.}
\label{fig:introex}
\end{figure}

We also investigate the automated generation of good puzzle instances.
We describe a five-step process which yields a puzzle instance.
Some of the considerations of a good instance are puzzle (complexity) based and
some are geometry based. Our process guarantees that the puzzle and
geometry criteria are met.

We implemented the generator and ran a number of experiments that uncover some properties of point set generation and puzzle diversity. The implementation includes a puzzle mode where the user can
solve generated instances by hand.

\section{Graph untangling puzzles}

We will limit the operations that change the drawing of the graph to arrive at
different puzzles. Since the puzzle type is abstract, it is necessary that
the interaction and operations themselves are simple. The puzzle then becomes
an elegant abstract puzzle of which there are many already
(Move, Lines/Flow, Zengrams, Nintaii, Fling, and several more).

Besides interacting with a vertex like in {\sc Planarity}, it is natural to
interact with an edge. Clicking or selecting is arguably the easiest interaction.
We list a number of ways in which the graph drawing can change when an edge is
selected:

\begin{description}
\item[Swap:] the two endpoints of the selected edge swap locations.
Intuitively, the edge turns around while the endpoints drag all incident edges with them.
\item[Rotate:] like swap, but now the selected edge rotates over $90$ degrees around its center. Since a single
edge can be selected consecutively three times, it does not matter whether we rotate
clockwise or counter-clockwise. 
\item[Stretch:] the selected edge is scaled by a factor $2$ from its center, or by a factor $1/2$.
\item[Collapse:] the endpoints of the selected edge are united. The united vertex is placed in the middle of the edge and gets all edges incident to the original vertices.
The selected edge is removed.
\end{description}

Of these versions, the first one distinguishes itself from the others because no
new vertex positions appear. The graph will always be drawn on the original
positions. Furthermore, the last version distinguishes itself by the fact that
the number of vertices is reduced. Eventually, the whole graph could be reduced
to a single vertex, so the challenge must be to remove all intersections in a
limited number of steps. In the first three versions, steps are reversible.

We can also stay closer to the original {\sc Planarity} puzzle and drag vertices
in more limited ways. For example, a set of points can be given along with the graph,
and the vertices must be dragged to the given points. This version is related to a well-known
problem in the graph drawing research area, namely that of embedding a graph on a
given set of points~\cite{cabello2006planar}. In essence, the initial drawing of the graph is irrelevant.

In this paper we concentrate on the swap version, named {\sc Swap Planarity}. It is perhaps the most elegant version
and the graphs appearing after operations can be controlled in their appearance, unlike
with the other versions (where edges may get so short that they cannot be selected
any more). All following results concern this version.

\begin{figure}[htb]
\centering
\includegraphics{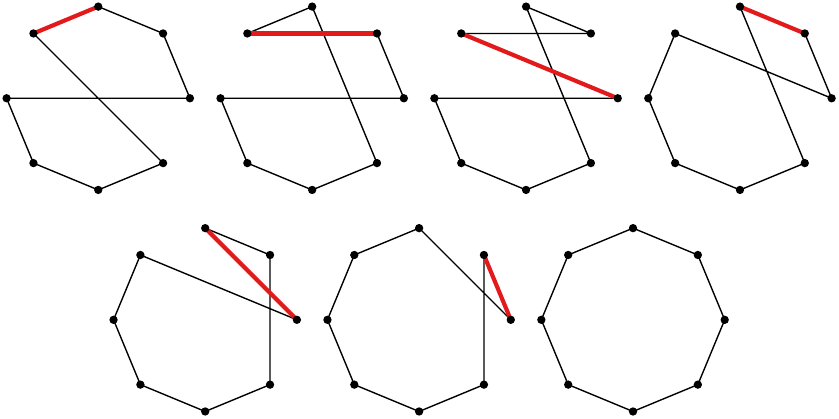}
\caption{Six steps to solve an $8$-cycle with one intersection. The edge to be
swapped is indicated.}
\label{fig:example}
\end{figure}

Before we go into the algorithmic complexity of solving such puzzles and the process
of generating good puzzle instances, we give a few examples to understand the puzzle better.
First, consider the puzzle instance in Fig.~\ref{fig:example}
with eight vertices and eight edges.
The graph is a single cycle and it has only one intersection.
To solve this puzzle, note that any swap will \emph{increase} the number of intersections.
The minimum number of swaps needed is six; the set of intermediate drawings is shown
in the figure and the selected edge is shown. When we extend this example to a set of
$n$ vertices and edges, we need $\Omega(n^2)$ swaps to solve the instance.

\begin{lemma}
There exist graphs with $n$ vertices that require $\Omega(n^2)$ swaps to obtain a plane drawing.
\end{lemma}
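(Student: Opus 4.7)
The plan is to exhibit an explicit family $\{(G_n, \sigma_0^{(n)})\}$ of cycles with initial drawings (having only a single crossing each) that require $\Omega(n^2)$ swaps to untangle, generalizing the $8$-cycle of Fig.~\ref{fig:example}. I place $n$ points $p_1, \ldots, p_n$ in convex position clockwise and take $G_n = C_n = v_1 v_2 \cdots v_n v_1$; a drawing is encoded by the permutation $\sigma$ with $v_i$ placed at $p_{\sigma(i)}$. The starting drawing is the ``half-reversed'' placement $\sigma_0 = (1, 2, \ldots, n/2, n, n{-}1, \ldots, n/2{+}1)$, which generalizes the figure and has exactly one crossing (between the edges $v_{n/2}v_{n/2+1}$ and $v_n v_1$).

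Two observations drive the argument. First, a drawing of $C_n$ on convex-position points is plane iff the vertices appear on the hull in cyclic order along the cycle, so the set $P$ of planar drawings corresponds to exactly $2n$ permutations: the $n$ cyclic rotations of the identity together with the $n$ rotations of its reverse. Second, a swap along edge $v_i v_{i+1}$ is precisely an adjacent transposition on the one-line notation of $\sigma$; thus for every fixed target $\pi$ the Kendall tau distance $\tau(\sigma,\pi)$ changes by exactly $\pm 1$ per swap. It follows that any sequence of swaps ending in a plane drawing has length at least $\min_{\pi \in P}\tau(\sigma_0,\pi)$.

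The main step is a direct calculation showing $\min_{\pi \in P}\tau(\sigma_0,\pi) = \binom{n/2}{2} = \Omega(n^2)$. I enumerate the $2n$ targets; for each, the pairs $(a,b)$ with $a<b$ split into three regions (both in $\{1,\ldots,n/2\}$, both in $\{n/2{+}1,\ldots,n\}$, or mixed), and the block structure of $\sigma_0$ makes each region's contribution elementary to count. For rotations $\tau_c$ with small $c$ one obtains $\tau(\sigma_0,\tau_c) = \binom{n/2}{2} + c(n-c)$, minimised at $c=0$; reflections are handled symmetrically and the matching minimum $\binom{n/2}{2}$ is achieved by the ``half-reversed identity'' reflection. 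In both cases the descending block $(n, n{-}1,\ldots, n/2{+}1)$ of $\sigma_0$ contributes the $\binom{n/2}{2}$ disagreements that no rotation can absorb, while every reflection that aligns the descending block instead incurs $\binom{n/2}{2}$ disagreements from the ascending block.

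The main obstacle is simply the bookkeeping across all $2n$ targets: although the minimum is suggested by the obvious symmetry, verifying that no off-axis rotation or reflection can beat $\binom{n/2}{2}$ requires the full case split. A slicker alternative, bypassing this enumeration, is a concentration argument: for a uniformly random permutation, $\tau(\sigma,\pi)$ is concentrated around $\binom{n}{2}/2$ with sub-Gaussian deviations of order $n^{3/2}$, so a union bound over the $2n$ targets in $P$ produces some $\sigma_0$ at distance $\Omega(n^2)$ from every planar target, at the cost of losing the ``single crossing'' strengthening of the explicit construction.
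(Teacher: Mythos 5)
Your construction and overall strategy coincide with the paper's: the same half-reversed cycle on points in convex position with a single crossing, the same identification of the $2n$ planar placements (the $n$ rotations of the identity and the $n$ rotations of its reverse), and the same reduction to counting adjacent transpositions with target bound ${n/2 \choose 2}$. The explicit enumeration of Kendall tau distances to all $2n$ targets that you propose is more detailed than what the paper writes down, but it is the same argument.

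There is, however, one step that is false as stated and that your lower bound leans on: a swap along the edge $(v_n,v_1)$ exchanges $\sigma(1)$ and $\sigma(n)$, i.e.\ the \emph{first and last} entries of the one-line notation, which is not an adjacent transposition. For that move the Kendall tau distance to a fixed target can change by as much as $2n-3$, not by $\pm 1$, so the inequality ``number of swaps $\geq \min_{\pi\in P}\tau(\sigma_0,\pi)$'' does not follow from what you wrote. (The paper's proof notes that ``the first and last are also adjacent'' but is equally terse about why the inversion count survives this.) A clean repair is to argue on the circular sequence rather than the linear one: restrict attention to the $n/2$ elements $p_{n/2+1},\ldots,p_n$ (or to $p_1,\ldots,p_{n/2}$ when the target is a reversed rotation), note that every swap changes the circular order of this subset by at most one cyclically adjacent transposition, and show that reversing the circular order of $m$ elements needs $\Omega(m^2)$ such moves --- for instance by a displacement argument, since any reflection of an $m$-cycle moves its elements a total circular distance $\Omega(m^2)$ while each move displaces two elements by one position. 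Your probabilistic alternative does not escape this issue either, since its swap-to-distance comparison uses the same $\pm 1$ claim, and it additionally gives up the single-crossing feature of the explicit family.
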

\begin{proof}
Consider the drawing of Fig.~\ref{fig:example} generalized to $n$ vertices, with $n$ even.
Name the vertices of the graph $v_1,\ldots,v_n$ so that $v_1,\ldots,v_{n/2}$ are
clockwise and $v_{n/2+1},\ldots,v_n$ are counter-clockwise. This implies that the
edges $(v_1,v_n)$ and $(v_{n/2},v_{n/2+1})$ intersect. Let us name the positions for
the vertices $p_1,\ldots,p_n$, where initially $v_1$ is at $p_1$ and the positions
are numbered clockwise, see Fig.~\ref{fig:lowerbound}.

\begin{figure}[htb]
\centering
\includegraphics{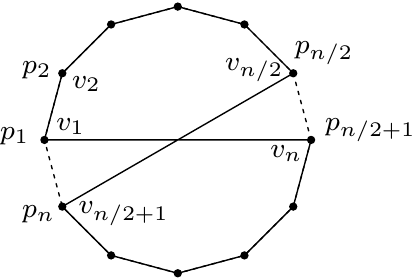}
\caption{Positions and vertices for the lower bound construction.}
\label{fig:lowerbound}
\end{figure}
In total there are $2n$ ways to place $v_1,\ldots,v_n$ on $p_1,\ldots,p_n$ without
intersections: in cyclic order clockwise or counter-clockwise, and starting anywhere.
This means that either $v_1,\ldots,v_{n/2}$ or $v_{n/2+1},\ldots,v_n$ must be reversed
on the positions $p_1,\ldots,p_n$.

Listing the points in the order of the cycle $v_1,\ldots,v_n$, we initially
get the
cyclic sequence $p_1,\ldots,p_{n/2},p_n,\ldots,p_{n/2+1}$. A swap exchanges precisely
two adjacent elements (where the first and last are also adjacent). Thus, to sort this sequence
in one of the $2n$ ways, at least ${n/2 \choose 2}= \Omega(n^2)$ swaps are needed.
\end{proof}

The next lemma shows that quadratically many swaps are sufficient; the result has been proved before as node swapping~\cite{YamanakaDIKKOSS15}.

\begin{lemma}
\label{lem:upperbound}
Every embedded graph with $n$ vertices that can reach a plane drawing using swaps has a sequence of $O(n^2)$ swaps to obtain this drawing.
\end{lemma}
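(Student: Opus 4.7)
The plan is to reduce this geometric problem to a purely algebraic one: writing a placement permutation as a short product of transpositions along edges of the graph. First I would observe that every swap exchanges the positions of two vertices joined by an edge, and therefore leaves invariant the multiset of positions occupied by each connected component of $G$. Consequently, if a plane drawing is reachable, then it assigns each connected component the same set of positions as the initial drawing, so components can be handled independently. Since the component sizes $k_c$ satisfy $\sum_c k_c^2 \le n^2$, it suffices to prove the $O(k^2)$ bound for a single connected component on $k$ vertices.

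Within such a component I would fix a spanning tree $T$. A swap along edge $(u,v)$ modifies the placement $f : V \to P$ by composing it with the transposition $(u,v) \in S_V$, so reaching $f^*$ from $f$ amounts to writing the permutation $\sigma = f^{-1} \circ f^*$ as a product of transpositions along edges of $G$; in particular, it is enough to use transpositions along edges of $T$. I would prove by induction on $k$ that such a decomposition of length $O(k^2)$ always exists. Pick a leaf $\ell$ of $T$, and let $w$ be the vertex currently occupying $\ell$'s target position. Bubble-sorting $\ell$ along the unique path $\ell = u_0, u_1, \ldots, u_m = w$ in $T$ realises the transposition $(\ell, w)$ using $2m - 1 = O(k)$ adjacent transpositions along that path; this is the standard identity expressing a long-range transposition on a path as a product of adjacent ones. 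After these $O(k)$ swaps the vertex $\ell$ sits at its target and the residual permutation fixes $\ell$, so I would recurse on $T \setminus \{\ell\}$, which is again a tree. The recurrence $S(k) \le S(k-1) + O(k)$ solves to $S(k) = O(k^2)$.

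The main obstacle is not combinatorial---bubble-sort on a tree is classical---but rather ensuring the reduction is clean: one must check that the reachability hypothesis forces the required permutation to lie in the subgroup generated by edge transpositions on each connected component (that subgroup is the full symmetric group on the component, since a spanning tree already generates it), and that subsequent swaps on $T \setminus \{\ell\}$ never involve $\ell$ and hence preserve its placement. Once those points are settled, summing $O(k_c^2)$ over components yields the claimed $O(n^2)$ bound.
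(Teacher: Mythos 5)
Your proposal is correct and follows essentially the same strategy as the paper: handle connected components separately, then peel off one vertex at a time, routing it to its target position along a path in the graph using $O(n)$ swaps, for $O(n^2)$ in total. The only (immaterial) difference is that you realize a clean transposition of $\ell$ and $w$ with $2m-1$ swaps via the conjugation identity, whereas the paper uses a one-way sequence of $m$ swaps along the path that cyclically displaces the intermediate vertices --- which is harmless since the induction makes no assumption about where the unplaced vertices currently sit.
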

\begin{proof}
Assume first that the graph has a single connected component. Name the positions
$p_1,\ldots,p_n$, and name the vertices of the graph $v_1,\ldots,v_n$ in such a way
that the graph is drawn plane if $v_i$ is at position $p_i$.
We prove by induction that any connected graph with $n$ vertices can place its
vertices at $v_1,\ldots,v_n$ at positions $p_1,\ldots,p_n$, respectively.

\begin{figure}[htb]
\centering
\includegraphics{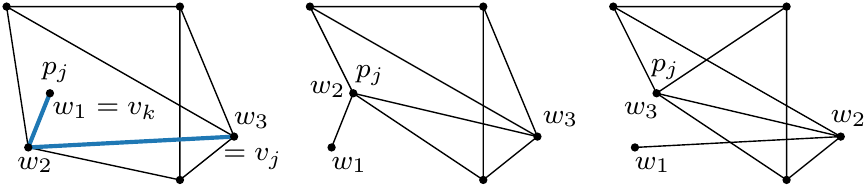}
\caption{Bringing $v_j$ to $p_j$ using the bold path $w_1,w_2,w_3$
(note that all plane embeddings must have $v_j$ at $p_j$).
Left: initial situation. Middle: after swap $(w_1,w_3)$.
Right: after swap $(w_2,w_3)$.}
\label{fig:upperbound}
\end{figure}
Choose any vertex $v_j$ such that its removal will leave the graph connected. Suppose
a vertex $v_k$ is currently at position $p_j$. Use the path between $v_j$ and $v_k$ in $G$
to get $v_j$ onto $p_j$ as follows. Suppose this path is $v_k=w_1,w_2,\ldots,w_h=v_j$,
see Fig.~\ref{fig:upperbound} for an example.
We swap $(w_1,w_2)$, then $(w_2,w_3)$, and so on until $(w_{h-1},w_h)$. This
brings $v_j$ onto $p_j$ in $h-1=O(n)$ swaps. We remove $v_j$ from the graph and $p_j$
from the locations and continue inductively. It is clear that at most $O(n^2)$ swaps are
needed in total.
If the graph has multiple connected components, we follow this procedure for each connected component.
\end{proof}

Another puzzle variant of swapping to planarity is possible, namely where
we swap any two vertices (so they need not be connected by an edge). The interaction
with the puzzle consists of clicking on two different vertices consecutively.
In this variation, any solvable puzzle instance with $n$ vertices is solvable
in at most $n-1$ swaps, because we can directly bring any vertex
to the correct position. The challenge of this variant reduces to recognizing
where vertices need to be to get a planar embedding, and no longer how to get it
there.

\section{Complexity of \textsc{Swap Planarity}}

\begin{theorem}
Given an embedded graph $G$, it is NP-complete to decide if the graph can be made planar using swaps.
\end{theorem}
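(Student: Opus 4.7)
For membership in NP I would invoke Lemma \ref{lem:upperbound} directly: whenever a plane configuration is reachable, one is reachable by $O(n^2)$ swaps, so the sequence of swaps itself is a polynomial-size certificate, and the verifier just applies them in order and tests straight-line planarity of the resulting drawing.

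For hardness I would reduce from \emph{point-set embeddability} of a planar graph: given a planar graph $G$ and a point set $P$ with $|P|=|V(G)|$, decide whether some bijection $V(G)\to P$ yields a plane straight-line drawing. This problem is NP-hard, and in particular remains so for connected planar inputs~\cite{cabello2006planar}. From any such instance $(G,P)$ I would build a \textsc{Swap Planarity} instance by choosing an arbitrary bijection $\pi_0: V(G)\to P$ as the initial embedding; the construction is clearly polynomial.

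Correctness hinges on a classical algebraic fact: in a connected graph, the transpositions induced by the edges generate the full symmetric group on $V(G)$. An induction on the length of a path from $u$ to $v$, using $u=x_0,\ldots,x_k=v$ and the identity $(u,v)=(x_{k-1},v)(u,x_{k-1})(x_{k-1},v)$, shows that every transposition lies in the group generated by edge-transpositions. Hence from any starting embedding $\pi_0$, \emph{every} bijection $V(G)\to P$ is reachable through some sequence of edge-swaps. Consequently the constructed \textsc{Swap Planarity} instance is a yes-instance if and only if some bijection on $P$ is a plane straight-line drawing of $G$, which is exactly the point-set embeddability condition; the two problems are therefore polynomially equivalent.

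The main subtlety I expect is ensuring the connectivity hypothesis can be taken for granted, and that any general-position convention on the point set is respected. If the cited hardness is not stated in a form immediately compatible with \textsc{Swap Planarity}, the natural remedy is to attach a single extra vertex $v^*$ adjacent to one vertex per component of $G$, together with a corresponding point $p^*$ placed so that $p^*$ sees all components without forcing extra crossings, and then argue that this gadget preserves both the existence of a plane straight-line drawing and the group-generation step above. This auxiliary argument is where I anticipate the most care being needed; the rest of the reduction is essentially bookkeeping.
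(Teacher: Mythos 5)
Your proposal is correct and follows essentially the same route as the paper: NP-membership via the $O(n^2)$ swap bound of Lemma~\ref{lem:upperbound}, and hardness by reducing from Cabello's point-set embeddability for connected graphs, starting from an arbitrary initial assignment and observing that in a connected graph every bijection of vertices to points is reachable by edge-swaps. The paper obtains that last reachability fact from the constructive argument in the proof of Lemma~\ref{lem:upperbound} rather than from the group-theoretic generation of the symmetric group, but the content is the same, and the connectivity caveat you flag is handled in the paper simply by noting that Cabello's hardness already holds for connected graphs.
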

\begin{proof}
A solution of the problem can be presented by the sequence in which vertices are swapped.
This solution can be represented in $O(n^2)$ space by Lemma~\ref{lem:upperbound}.
 Swapping these vertices and checking if the resulting graph is plane can be done in polynomial time, hence the problem is in NP.

Cabello~\cite{cabello2006planar} showed that it is NP-complete to decide if a given point set $P$ admits a planar drawing of a given graph $G$ where the vertices must be placed at the points. This is also true for connected graphs. Given an instance of this problem with a connected graph, we assign the vertices of $G$ to the points in $P$ arbitrarily.

We now solve the graph planarization using swaps on this embedding of the graph. If it has a solution, we can just output the final point-vertex relation, leading to a planar embedding of the given graph. If no solution exists, we also know that no planar embedding exists,
since by the proof of Lemma~\ref{lem:upperbound}, we can realize any assignment of vertices
to points in a connected graph.
\end{proof}

It is known that if $G$ is a tree, the embedding problem of $G$ onto $P$ is no longer NP-complete because every tree can be embedded without intersections onto a planar point set~\cite{bose2002embedding,pach1991embedding}. This does not imply that our puzzle game is easy to solve when the
graph is a tree \emph{when we bound the number of swaps}. In particular, we can show that deciding whether the vertices of an embedded tree can
be swapped to become plane in at most $k$ swaps is NP-complete.

\begin{theorem}
  Given an integer $k$ and a embedded tree with $n$ vertices, it is NP-complete to decide if $k$ swaps suffice to obtain a plane drawing.
\end{theorem}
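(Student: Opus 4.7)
Membership in NP follows directly from Lemma~\ref{lem:upperbound}: any yes-instance admits a solution of length at most $O(n^2)$, so the sequence of swaps itself is a polynomial-size certificate that can be verified in polynomial time by replaying the swaps and running a linear-time planarity test on the resulting drawing.

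For NP-hardness, my plan is to reduce from \textsc{Token Swapping on Trees} (TST): given a tree $T$, an initial vertex-to-token assignment $\pi_0$, and a target assignment $\pi^*$, decide whether $k$ adjacent swaps along edges of $T$ suffice to transform $\pi_0$ into $\pi^*$. This problem is known to be NP-hard. The conceptual gap between TST and our puzzle is that our game has no prescribed target: any plane drawing counts. The reduction must therefore augment $T$ and the point set so that, effectively, $\pi^*$ is the only inexpensive planar configuration.

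Concretely, I would start from $T$ embedded according to $\pi_0$ and attach to each vertex $v_i$ an \emph{identifier gadget} $G_i$---for instance a star with $n+i$ leaves, so each $G_i$ is distinguishable purely by size. The point set is extended by $n$ tight clusters of auxiliary points: the cluster centred at $p_j$ contains exactly the number of slots needed to host the gadget $G_{(\pi^*)^{-1}(j)}$. If clusters are pushed far apart relative to their diameter, any plane embedding of the augmented tree $T'$ must place each $G_i$ inside the unique cluster whose capacity matches it, which forces $v_i$ to land at $p_{\pi^*(i)}$. Because stars are symmetric, the internal leaf arrangement inside a correctly placed cluster contributes no extra swap cost; because the only edges crossing between clusters are edges of the original $T$, inter-cluster swaps are restricted to $T$-edges and mimic TST swaps. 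Setting the swap budget to $k' = k$ should then make the two problems equivalent.

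The main obstacle is the two-sided bookkeeping required to turn this into a precise reduction. I would need to establish that (i) no planar configuration of $T'$ on the extended point set places a core vertex outside its matching cluster---this should follow from a sufficiently large cluster separation together with capacity slack that prevents mismatched gadgets from fitting---and (ii) no swap sequence can ``shortcut'' by temporarily parking gadget leaves in foreign clusters, since any such detour must be paid for by extra star-edge swaps to eject the leaf again. Both points concern intermediate configurations rather than just the final drawing, and I expect the careful accounting of how leaves move between clusters---always via their star centre and only along $T$-edges---to be the most delicate step; once handled, the optimum swap count for $T'$ equals the optimum for the TST instance, completing the reduction.
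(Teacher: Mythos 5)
Your NP-membership argument is fine and matches the paper's. The hardness direction, however, has a structural flaw that goes beyond the ``delicate bookkeeping'' you defer: it is centred on the identifier gadgets. You attach to each core vertex $v_i$ a star $G_i$ whose size matches the capacity of the cluster at the \emph{target} position $\pi^*(i)$, but you never say where the leaves of $G_i$ sit in the \emph{initial} drawing. They cannot all sit with $v_i$ at $p_{\pi_0(i)}$, because that cluster's capacity is tuned to a different gadget; so some star edges must initially span between clusters. Any such long star edge is itself swappable: a single swap along it teleports $v_i$ out of its current cluster directly into a far-away one, bypassing the path in $T$ entirely, which is exactly the kind of move Token Swapping on Trees forbids. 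Moreover, since a leaf is adjacent only to its centre, leaves can be relocated only by shuttling the centre back and forth, so the cost of moving a gadget (or of cleaning up after a teleport) bears no clean relation to the TST distance. The claimed equivalence $k'=k$ therefore cannot hold as stated, and there is no argument that the puzzle optimum differs from the TST optimum by even a computable additive constant. A further unaddressed failure mode: the puzzle's goal is \emph{any} plane drawing, so if $T$ drawn at $\pi_0$ happens to be plane and the gadget edges create no crossings, the puzzle needs $0$ swaps while the TST instance may need many; you would have to engineer crossings that force every misplaced vertex to move, and the proposal contains no such mechanism. Finally, NP-hardness of token swapping on trees is itself a deep and comparatively recent result, so even a repaired version of this route rests on much heavier machinery than necessary.

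The paper takes a different and more self-contained route: it reduces from positive planar 1-in-3-SAT, building explicit variable, split and clause gadgets out of small crossing paths, each with a locally countable minimum number of swaps, and then deletes a few interior edges to turn the construction into a tree without changing the optimum. The budget $k$ is the sum of the local optima, and satisfiability corresponds to whether each clause gadget can be untangled within its five-swap allowance. If you want to salvage your approach you would need, at minimum, to fix the initial leaf placement, rule out star-edge teleports, and prove a two-sided bound relating the puzzle optimum to the TST optimum; each of these is a substantial lemma, not bookkeeping.
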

\begin{proof}
  A solution of the problem can be presented by the sequence of edges to be swapped. Swapping these edges and checking if the resulting tree is plane can be done in polynomial time, hence the problem is in NP.

  For ease of explanation, the given reduction contains a number of collinear vertices. By perturbing the vertices slightly, however, the same construction works for points in general position. We reduce from positive planar 1-in-3-SAT. This problem was shown to be NP-complete by Mulzer and Rote~\cite{mulzer2008minimum}.

  \textbf{Positive planar 1-in-3-SAT.} In the positive planar 1-in-3-SAT problem we are given a collection of clauses, each consisting of exactly three variables. Each of these variables occurs positively in the clause. In addition, we are given a planar embedding of the clauses and variables such that a variable is connected to a clause if and only if the variable occurs in the clause. The positive planar 1-in-3-SAT problem asks to decide if there exists a truth assignment to the variables such that for each clause exactly one variable is true.
  
  For the reduction, we introduce gadgets for the variables, clauses and connections between these in the given embedding. We describe the construction and functioning of each gadget below, noting that an overview of the final construction is given at the end in Fig.~\ref{fig:fullconstruction}.

  \textbf{Variable gadget.} We construct a variable gadget as follows; see Fig.~\ref{fig:Variable} for an illustration. The basic construction is a path of 7 vertices, such that the first two and the last two form the corners of a square (in convex position), such that the path order matches the vertex order around the square's boundary. The remaining three vertices are placed inside the square such that their connecting edges do not intersect, but the edges $(v_2,v_3)$ and $(v_5,v_6)$ do intersect. In order to remove the created crossings using the minimum number of swaps, we need to swap the first or last edge of this path. 
  
  We now create a variable gadget, by repeating this basic construction into a longer path. The first two vertices are the last two vertices of the previous construction in the same order; effectively, we vertically mirror the basic path of every second repetition. We always use an odd number (at least 3) of repetitions. This ensures that we can remove all crossings by swapping the endpoints of all even or all odd vertical edges and this requires the same number of swaps. We designate swapping all the even vertical edges ($(v_6,v_7)$,$(v_{16},v_{17})$, etc.) to indicate \false; all odd vertical edges as \true.
  Note that this same construction can be used to propagate the truth value over longer distances as well, and that there is flexibility in this construction to make bends.

\begin{figure}[tb]
\centering
  \centering
  \includegraphics{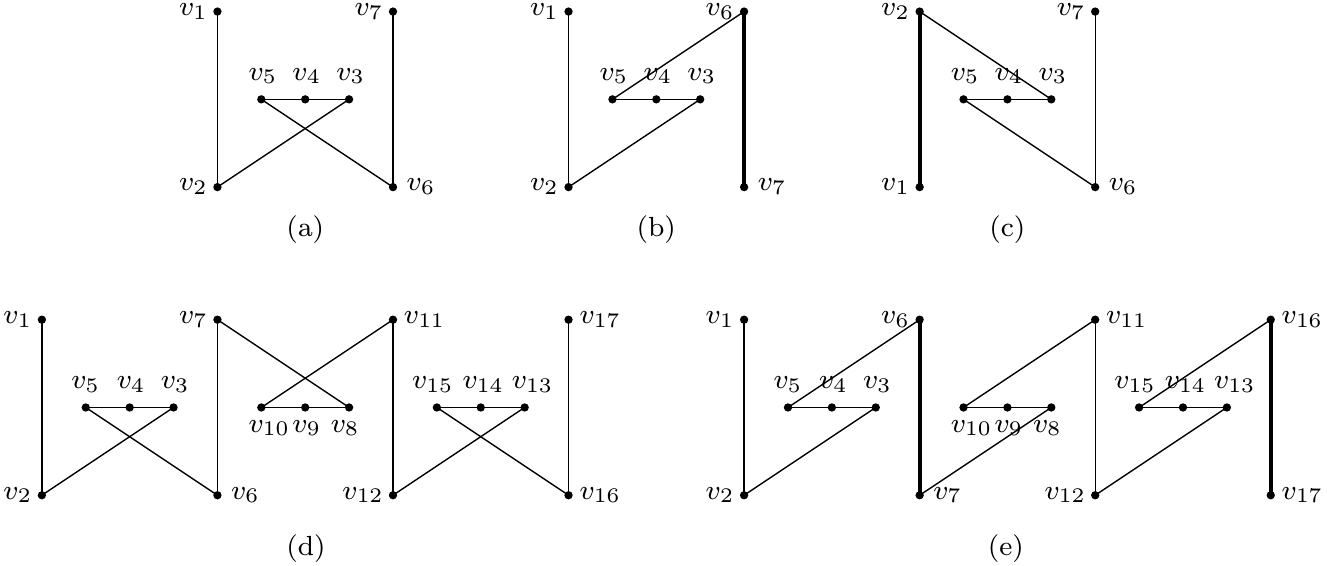}
  \caption{The variable gadget. (a) Basic construction of input path of 7 vertices. (b--c) Two ways of doing a single swap (thick edge) to untangle the basic construction. (d) Chaining the basic construction into a longer path. (e) The minimal-swap solution for a \false assignment, swapping half of the vertical edges.}
  \label{fig:Variable}
\end{figure}

  \textbf{Split gadget.} 
  In order to connect the variable to clauses, we construct a split gadget to be attached to a variable gadget; see Fig.~\ref{fig:Split}a for illustration. We take another basic construction path (indicated using the $w$-vertices) and rotate it clockwise by 90 degrees and place it below one of the even vertical edges in the variable (e.g. $(v_6,v_7)$). We shorten $(w_1,w_2)$ and lengthen $(w_6,w_7)$ such that we can place a helper vertex $h$ between $v_2$ and $v_6$ such that $h$ is inside triangle $w_1w_2v_7$ but outside triangle $w_1w_2v_6$ in the input. We then add edges $(v_6,w_1)$ and $(w_7,h)$. 
  
  There are two minimal ways of removing the crossings, each costing one swap in the split gadget, plus two in the variable construction shown in the figure. In a \false assignment, $(v_6, v_7)$ is swapped; swapping $(w_1,w_2)$ then resolves all intersections. In the \true assignment, $(v_6,v_7)$ is not swapped and swapping $(w_6,w_7)$ untangles the gadget. Observe that swapping $(w_1,w_2)$ does not resolve the intersection between $(w_1,v_6)$ and $(w_7,h)$ in the \true assignment.
  
\begin{figure}[htb]
  \centering
  \includegraphics{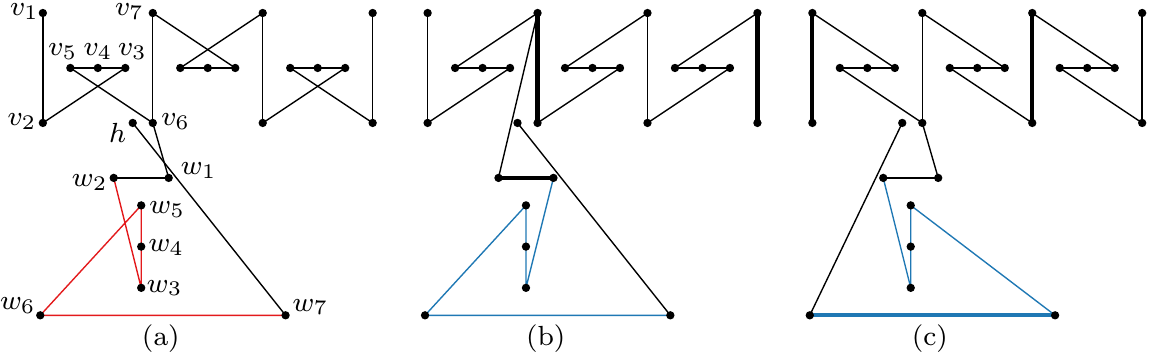}
  \caption{(a) The split gadget and its two minimal planarizations: (b) the \false assignment and (c) the \true assignment. Thick edges in (b) and (c) are those that have been swapped with respect to (a).}
  \label{fig:Split}
\end{figure}

  \textbf{Clause gadget.} 
  The construction of a clause gadget is shown in Fig.~\ref{fig:Clause}. We place a central vertex $c_1$ and we place three vertices $c_2$, $c_3$, and $c_4$ equidistant from it, connecting them to $c_1$. Next, we place three layers of three vertices each equidistant from $c_1$ such that each layer forms a triangle containing the central vertex. We place these such that all its vertices are placed well within the triangle $c_2c_3c_4$. We note that the only way to untangle this structure is to swap locations of the central vertex with one of $c_2$, $c_3$, and $c_4$ and orient the three layers in such a way that the edges missing in each layer line up towards the new location of $c_1$. This takes three swaps in total.
  
  \begin{figure}[bt]
\centering
  \includegraphics{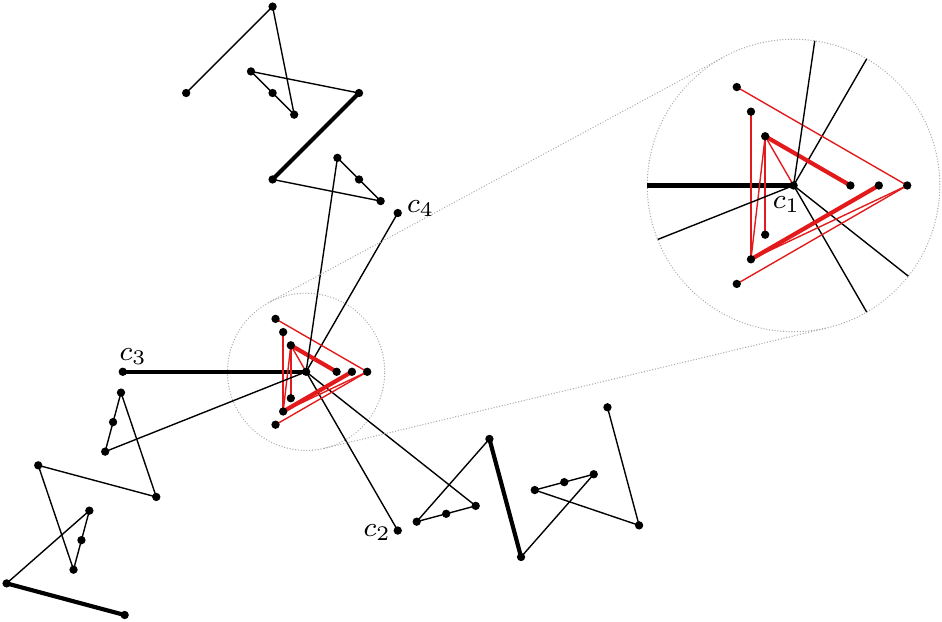}
  \caption{The clause gadget with enlargement of the central construction. Thick edges must be swapped to arrive at the drawing for Fig.~\ref{fig:SatisfiedClause}.}
  \label{fig:Clause}
\end{figure}

We connect a variable to a clause by using the split gadget at the variable, and then building a connection of an odd number of repetitions of the basic construction path, including one inherent in the split gadget to the clause, using one of $(c_1,c_2)$, $(c_1,c_3)$ or $(c_1,c_4$) for the last repetition of the basic construction. When placing this last repetition, we ensure that regardless of which of the three clause vertices $c_2$, $c_3$, and $c_4$ swaps with $c_1$, the edges connecting the variables to $c_1$ do not cross any of the other edges of the clause gadget. Furthermore, we ensure that the last crossing of the basic construction is untangled for free when $c_1$ is swapped with the vertex next to the clause (see Fig.~\ref{fig:Clause}).

In Fig.~\ref{fig:Clause} we illustrated two of these repetitions for each variable. This may either directly connect to the split gadget as shown in Fig.~\ref{fig:Split}, or use another even number of repetitions in between; we assume the former in our exposition here. But either case guarantees that the number of swaps required to untangle it is the same, regardless of whether the variable is \true or \false.

  When exactly one of the three variables has the value \true, we can untangle the clause gadget using five swaps. Note that we do not count the outermost swap of the \true variable, as this is shared with the split gadget. In Fig.~\ref{fig:Clause}, assume that the lower left variable is \true and the other two are \false. We untangle the clause gadget by swapping the edge connected to the center vertex that belongs to the \true variable (the horizontal edge in Fig.~\ref{fig:Clause}). We also swap two of the edges of the middle triangles to make that part plane. Finally, we swap the endpoints of the middle edge of the connecting gadget of the two \false variables. The result is shown in Fig.~\ref{fig:SatisfiedClause}. We note that the connection of the \true variable is untangled because the split gadget passes on its truth assignment.

\begin{figure}[tb] 
  \centering \includegraphics{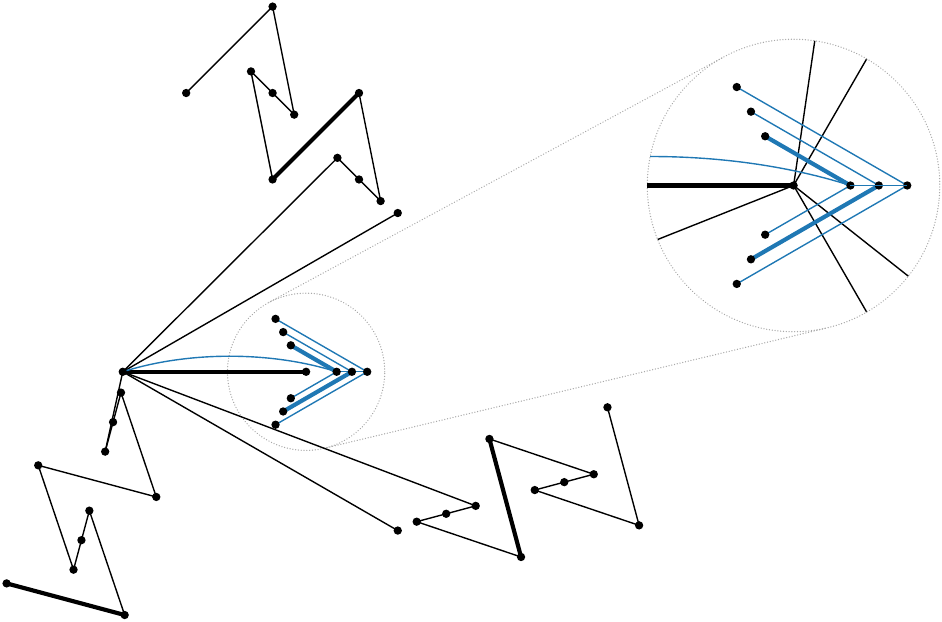}
  \caption{A clause satisfied by the bottom-left variable. Thick edges are those that have been swapped from Fig.~\ref{fig:Clause}.}
  \label{fig:SatisfiedClause}
\end{figure}

  Now consider the case where the clause is not satisfied. This can be because either no variable is \true or because at least two variables are \true. In the first case, the number of swaps needed to untangle the gadget is at least six, since we have to perform the same swaps as in the case where the clause is satisfied and in addition we need to swap the first edge of the connection of the third variable. This last edge swap corresponds to swapping the endpoints of the last edge in Fig.~\ref{fig:Split}c.

  Next, consider the case where the clause is not satisfied because at least two variables are \true. We first note that to untangle the clause, we need to swap one of the edges connected to the center as well as two edges of the triangles surrounding the center. One of the \true variables does not require any additional swaps, so let us consider the other two connections. For each of these connections that represents a \true variable, we note that we cannot swap the first edge of the connection, since it is fixed by the split gadget (see Fig.~\ref{fig:Split}b). Hence, to untangle an additional \true variable, we need to reverse its sequence of two horizontal middle edges; this reversal requires three swaps. If there are two \true variables, this implies that we need six swaps for the clause gadget. If there are three \true variables, seven swaps are needed.

  It remains to argue that there is no globally different set of swaps that makes the graph plane with fewer operations. Intuitively, for any pair of crossing edges, at least one of the neighboring edges needs to be swapped in order to remove the crossing. This argument implies that for edges that cross only a single other edge, we perform the minimum number of swaps to remove the crossings. Hence, we need to consider only those edges that cross multiple edges. Such edges occur only in the clause gadgets. We observe that the central vertex is surrounded by red triangles, hence to remove the crossings of the central vertex and these red edges, we need to either align the triangles (as our method does in the minimum number of operations) or swap at least one vertex of each triangle layer with a vertex outside the triangles. In order to perform the latter swaps, we would need to swap the endpoints of at least two edges per layer, since we first need to swap the central vertex with some edge outside the triangles followed by one or more operations to swap the desired triangle vertex with the central vertex. Hence, this approach requires at least six swaps in order to remove these crossings from a single clause gadget, which is more than the five swaps our approach needs. 
  
\begin{figure}[b]
    \centering
      \includegraphics[page=2]{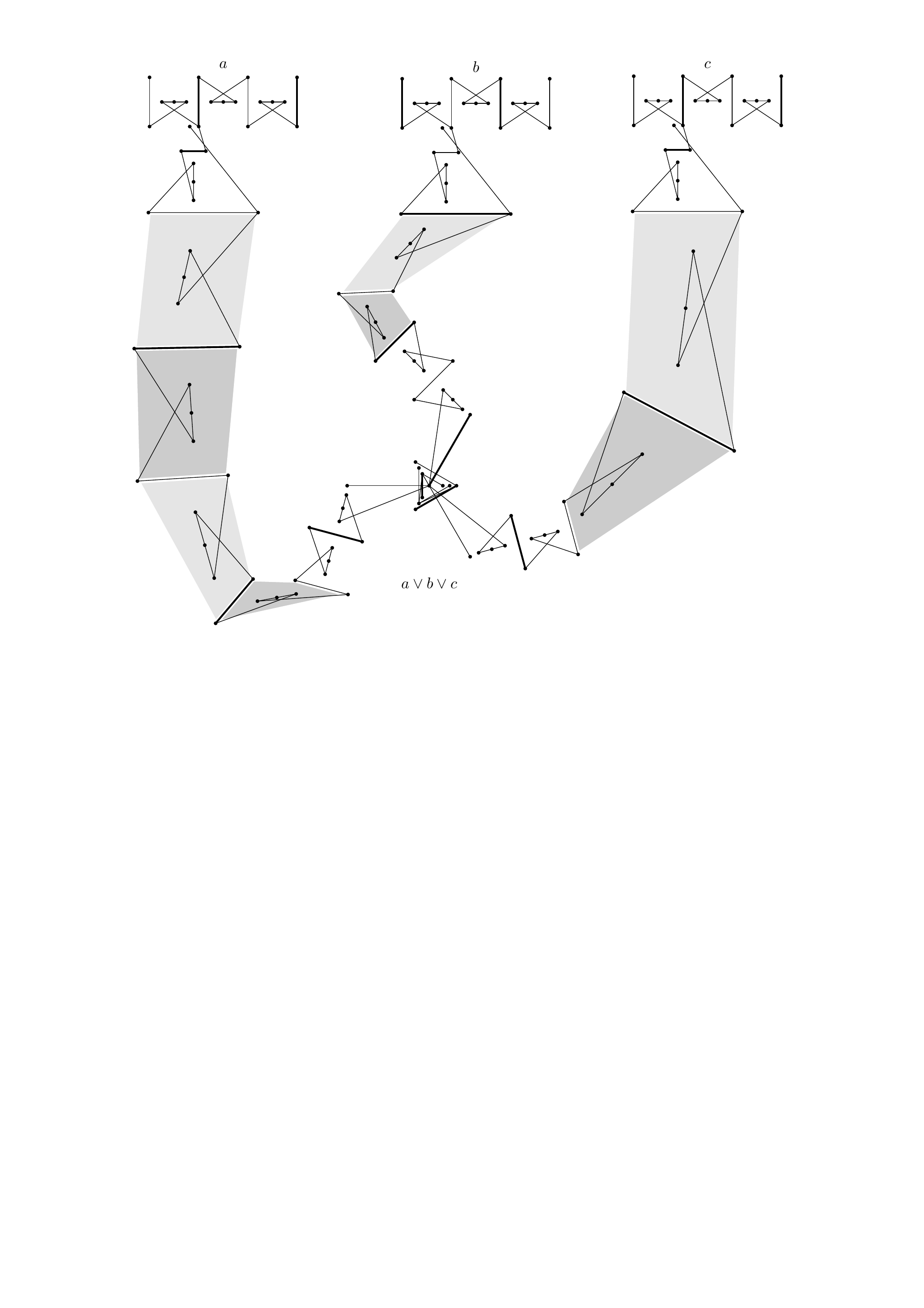}
    \caption{Schematic representation of a complete construction for $(a \vee b \vee c) \wedge (a \vee c \vee d)$. Each gadget is represented using a dark gray shape: a rectangle is a variable, a triangle is a split, and a spiral is a clause. Light gray areas represent a sequence of basic constructions to connect splits at variables to clauses.}
    \label{fig:fullconstruction}
  \end{figure}
  
  Hence, when a clause is not satisfied, untangling it takes more than five swaps. Finally, since the number of swaps required for the variable gadgets and split gadgets is the same regardless of whether the instance is satisfiable or not, an instance is satisfiable if and only if we need five swaps per clause to untangle all clauses.

  \textbf{Constructing a tree.} 
  Each of the gadgets above can be translated, scaled and rotated. We place the gadgets to adhere to the given embedding of the positive planar 1-in-3-SAT instance, using basic constructions to connect clause gadgets with their split gadgets at each relevant variable gadget. This is schematically shown in Fig.~\ref{fig:fullconstruction}. However, this construction is generally not a tree, as it can contain cycles.
  To construct a tree, we remove the middle edges from some basic constructions (see Fig.~\ref{fig:MakingTree}). Since the endpoints of these edges are never swapped in any satisfiable assignment, this does not influence the satisfiability of the instance.

  \begin{figure}[htb]
    \centering
      \includegraphics{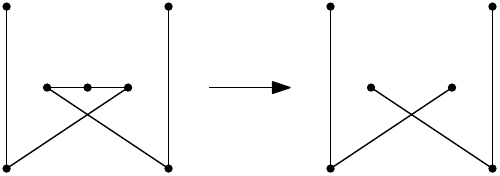}
    \caption{Removing edges from the basic construction to turn the graph into a tree.}
    \label{fig:MakingTree}
  \end{figure}

  This last step shows that we can solve an instance of positive planar 1-in-3-SAT by constructing a tree and determining whether the clause gadgets can be untangled using five swaps per clause. Retrieving the variable assignment for positive planar 1-in-3-SAT can be done by checking how the corresponding variable gadgets are untangled. Hence, the problem is NP-complete.
\end{proof}

\section{Generating levels}

In this section we describe how puzzle instances or levels can be generated for \textsc{Swap Planarity}. 
First we outline a five-step procedure, and then we explain these steps
in more detail. We pay attention to three properties: (i) the puzzle instance should
look good, also in states to be reached later, (ii) every possible good puzzle
instance should be a possible output, for diversity, and (iii) solutions
should not have a particular structure that might be identified by a puzzler,
which may upset the intended puzzle instance difficulty.

\subsection{Process of level generation}

We describe a five-step procedure to generate a puzzle instance.
We assume that a desired number $n$ of vertices is specified, and also
a desired number $m$ of edges, and a desired minimal number $s$ of swaps to the
solution.

\begin{enumerate}
\item
Generate a set $V$ of $n$ points in a playing area, such that for no
two points, an edge between them would visually conflict with any other point from $V$ (property (i)).
\item
\label{step:Delaunay}
Generate a Delaunay triangulation on $V$, leading to an edge set $E''$.
\item
\label{step:flip}
Perform a number of Lawson flips to make sure that the solution of the
puzzle instance need not only have Delaunay edges (done for properties (ii) and (iii)). 
This makes $E'$ out of~$E''$.
\item
\label{step:remove}
Remove a number of edges at random from $E'$ until $m$ edges remain.
Make sure that no isolated vertices remain. This gives the edge set $E$.
\item
Perform $s$ swap operations at random, by picking edges at random from $E$.
Test if the resulting instance requires $s$ swaps to a planar state (and if
not, swap more edges).
\end{enumerate}
The whole process ensures property (ii): any puzzle instance that satisfies property (i) can be generated, provided that sufficiently many flips are performed in step 3.

\subsection{Generating points}

Given the shape of screens, it is natural to generate a point set in a square or rectangular
region. There are two important issues to consider when generating point sets. First, collinearity or near-collinearity of points means that potentially, an edge will partly overlap a vertex in the drawing. 
This is undesirable. Second, point sets are ``combinatorially different'', which relates to the variation to be obtained in puzzle instances. We discuss
these two issues next.

Let us assume that each vertex is drawn as a disc with radius $\rho$. Then any two vertices (centers) should be separated more than $2\rho$ in order for their
discs to be disjoint.
Every edge is drawn as a rectangle with its length matching the distance between its endpoints ($>2\rho$) and width $\lambda<2\rho$.
The center of each vertex should be further away than $\rho+\lambda/2$ from
the center line of any edge that it is not incident to~\cite{van2011bold}.
To have a little more room around each vertex and edge we introduce a parameter $\delta$ that specifies for each point how far it must
be from each other point and edge, when points are viewed as $0$-dimensional and edges as $1$-dimensional. We always choose
$\delta > 2\rho$.

\begin{definition}
Given $\delta>0$, a set $P$ of points in the plane is in \emph{$\delta$-general position} if and
only if for any three distinct points $p,q,r\in P$, the distance from $r$ to the line through $p$ and $q$ is at least $\delta$.
\end{definition}

To generate a point set in $\delta$-general position, we incrementally add points,
uniformly distributed in a square. 
For each addition, we check if the $\delta$-general position condition is violated, and if so, we discard the last added point. 
To test this condition, we consider every pair of accepted points with the newly added point.
Using a bit of geometry we can identify a region bounded by six lines where the
new point may not lie, see Fig.~\ref{fig:collinear}. 
Two of these lines are the outer tangents to two discs
of radius $\delta$ centered on the two accepted points. 
The other four are tangents to one of these discs, passing through the other accepted point.
Hence, this test can be done in quadratic time per new point.

Note that a set of two points is always in $\delta$-general position, but any third point enforces all points in the set to be at least distance $\delta > 2\rho$ apart; hence, we do not need to check vertex-vertex distances after adding the second point.

  \begin{figure}[b]
    \centering
      \includegraphics{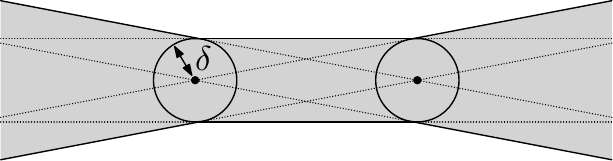}
    \caption{Region where a third point may not be placed if $\delta$-general position should be preserved.}
    \label{fig:collinear}
  \end{figure}

When we generate puzzles with a considerable number of points we may get many failures.
It is possible to compute the whole region where new points can be placed by
generating the quadratically many regions for the accepted points and computing their union.
The complement of this union is where a new point can still lie. In particular, we can
compute this union and sample the complement explicitly, which means we do not get failures.
If the union covers the whole square, we cannot add points anymore. 
For $a$ accepted points so far, this union has complexity $O(a^4)$ and 
can be computed in $O(a^4)$ time~\cite{halperin2018arrangements}.

We next discuss the issue of combinatorially different point sets. To understand what this
means, imagine a set of $n$ points in convex position: they all lie on the convex hull.
Whether points lie as the vertices of a regular $n$-gon, or spread on an ellipse, or more
randomly placed (but still in convex position), these point sets are essentially the same
from the perspective of intersecting edges between these points. 
Any graph on these points has the same intersecting edges regardless of
where the points lie precisely. 
Moreover, any point set with $n$ points and $k$ on its convex hull ($3 \leq k \leq n$) has at most $3n-k-3$ edges that do
not intersect (the fewer vertices on the convex hull, the more edges can be in a plane graph).
Point sets with the same number of points but different numbers of points on the convex hull are combinatorially different.
But there are still differences between point sets with the same numbers of points and the same number of points on the convex hull. In Section~\ref{sec:instanceequivalence} we look more closely at the concept of puzzle equivalence.
In our experiments we use the number of points inside the convex hull ($n-k$) as a simple indicator of how varied instances may be. If there are no points inside, then the point sets are effectively the same; more points inside allow for more combinatorial differences.

\subsection{Generating a plane graph}

Once we have generated a set $V$ of $n$ points without collinearity or closeness, we can
generate edges. We generate a plane graph (a solution) to a puzzle instance in three 
steps (steps~\ref{step:Delaunay}--\ref{step:remove}).

First, we compute the Delaunay triangulation of $V$~\cite{bcko-cgaa-08}. This is a specific triangulation
of a point set that maximizes the smallest angle that is used in the triangulation.
This triangulation is also characterized by the empty-circle property: for any two points $v_i$ and $v_j$
for which a circle exists that touches only $v_i$ and $v_j$ and which has no points of
$V$ inside, there is an edge connecting $v_i$ and $v_j$. This characterization
(in general) completely specifies the triangulation. There are several known
algorithms to compute the Delaunay triangulation of $n$ points in $O(n \log n)$ time.
This gives the edge set $E''$.

Second, we perform a few Lawson flips (beware that flips and swaps are very different operations). A Lawson flip can be applied to a pair of
edge-adjacent triangles in a triangulation if those triangles together form a
\emph{convex} quadrilateral. A Lawson flip removes the shared edge and re-triangulates
the resulting quadrilateral in the (only) other way. These 
flips make it harder for a puzzler to solve instances. Delaunay triangulations
favor shorter edges, and Lawson flips can generate longer edges again. If a puzzler
would know---or realize---that the solution to each puzzle instance uses only
Delaunay edges, then (s)he can quickly see which edges must be avoided in the drawing
by imagining the empty-circle test (let's face it: these puzzles are going to be
done by geometers). Edges to be flipped are selected randomly, and the flip is done
only if the four involved vertices are in convex position (otherwise the resulting
drawing would be non-planar). The resulting edge set is denoted $E'$.

Third, we remove some edges from $E'$ so that a puzzle instance solution is not
always a triangulation. We ensure that no isolated vertices are created, by not removing edges with an endpoint of degree 1.
These would not influence the puzzle or its solution in any way. Notice that an
isolated edge does influence the puzzle. While a swap applied to such an edge
does not change the drawing, swapping other edges may resolve edge intersections with
the isolated edge. 

By removing edges we can realize a desired number of edges in the solution.
Removing many edges may cause the puzzle instance to have multiple solutions 
and become easy.

\subsection{Generating an instance}

We have now generated a graph with a specified number of vertices and edges, 
and in particular, a solution to this puzzle instance. To generate the puzzle
instance itself we make some swaps such that undoing these (swapping the same
edges in reverse order) solves the instance.

It appears that puzzle instances with just two or three swaps from a solution are
already not so easy (Fig.~\ref{fig:introex}). Once a player gets more experienced, instances with four
swaps may become suitable. This means that testing the difficulty of a solution
can be done by brute-force. For example, a graph with $20$ edges that should be
four swaps away from a solution can be tested by trying all $20\cdot 19^3=137,180$
possibilities (we exclude swapping the same edge twice in a row). 
This may lead to an instance with fewer necessary swaps to solve than we have used to generate it; in this case we perform extra swaps until the desired minimal number of swaps is obtained.
We will also recognize if there
are more ways to a solved state, making the instance a bit easier too. Finally,
swaps that are independent and possibly even well-separated also give
rise to easier instances. Two swaps are independent if the four endpoints of the edges
are disjoint and there is no other edge than the two that are swapped between these four
vertices.

We have now realized the three properties we aimed for.
The visual quality (i) of the instance and every intermediate state that can be reached
is captured by the vertex-vertex distance and vertex-edge distance conditions.
The puzzle diversity (ii) is realized by allowing any number of vertices, edges, and steps
to the solution, every possible plane drawing as a solution, and every possible non-plane
drawing as a puzzle instance. There is no puzzle instance that cannot be generated.
Absence of unintended structure (iii) is accomplished by ensuring that for a point set, any
edge between two points could be part of the solution.

\section{Implementation and experiments}

The {\sc Swap Planarity} game is implemented using Unity. Besides trying the game to see how difficult and fun puzzle instances are, we are interested in the efficient generation of non-collinear point sets, the number of points on the convex hull, the non-collinearity parameter $\delta$, and relations these.

\begin{figure*}[htb]
\centering
\includegraphics[scale=0.19]{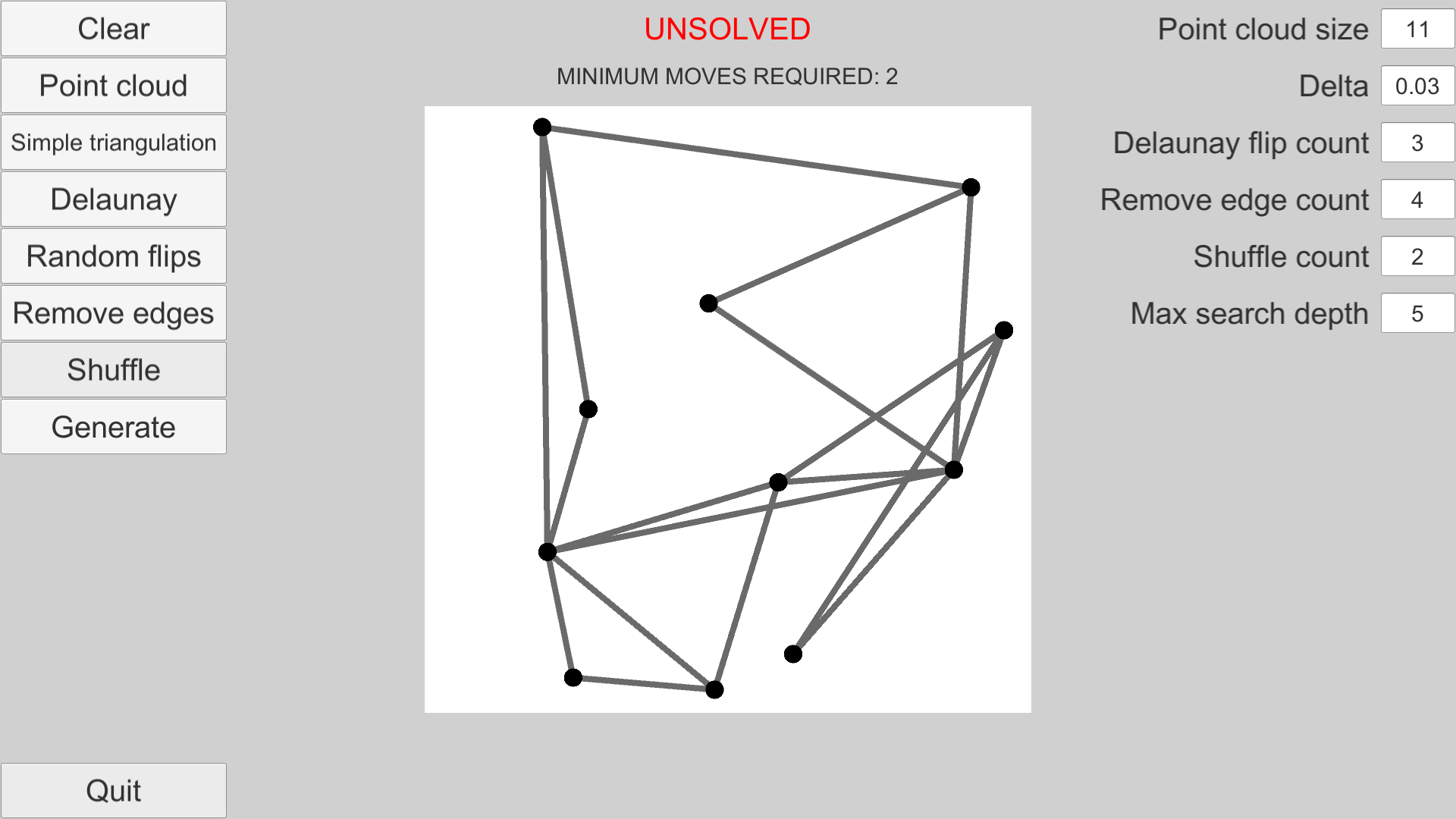}~~
\includegraphics[scale=0.25]{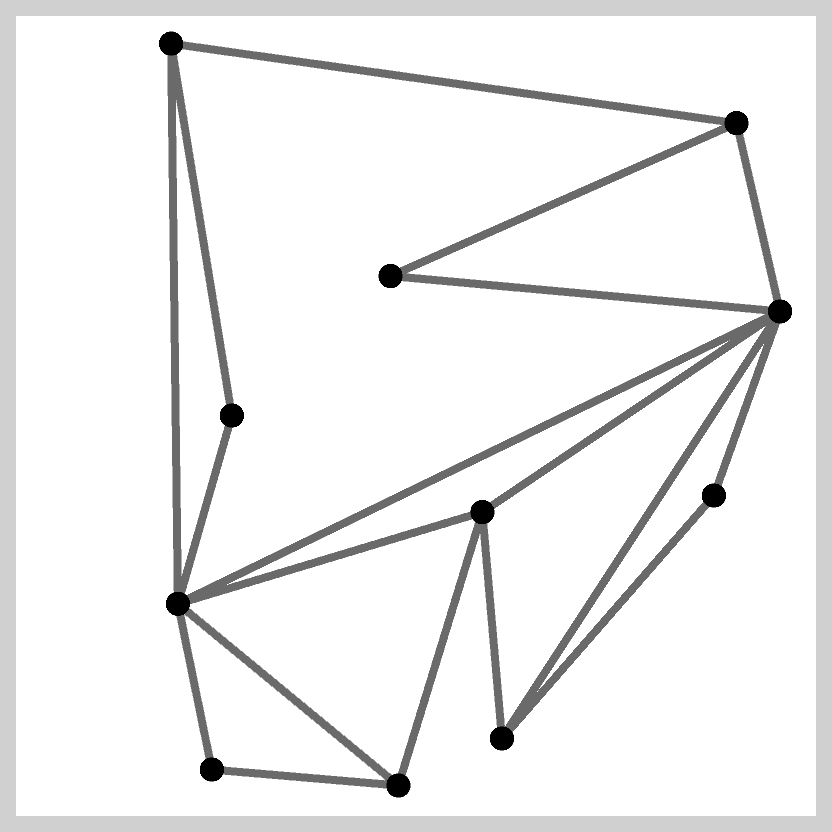}
\caption{Left figure, screenshot with the steps of the generation listed in sequence (Generate does all steps in order) and the settings used. Right figure, the solution of this puzzle instance.}
\label{fig:screenshot}
\end{figure*}

Fig.~\ref{fig:screenshot} shows the interface. From the settings on the right we can see that the instance has $11$ points generated with $\delta=0.03$ to ensure non-collinearity, the initial triangulation is $3$ flips away from being Delaunay, then $4$ edges were removed and two swaps were performed to shuffle the planar graph. The solution is shown on the right.

When we try to generate a large point set with a large value of $\delta$, we may fail because there may not be enough space on the screen (play area) to realize the separation. This also depends on the random generation itself. It can happen that a point set of $14$ points cannot be extended to $15$ points without violating collinearity, but sets of $15$ non-collinear points may
still exist. This means that the point generation procedure may have to abort and restart. If aborting is done too early, generation may be inefficient because we start from scratch without having to. If aborting is done too late, generation may have spent a lot of time on a configuration that cannot be extended anymore. Fig.~\ref{fig:graph-performance} illustrates this for a fixed value of~$\delta$; data points we generated with intervals of $50$ between $0$ and $500$ and with intervals of $500$ after that. Note that the vertical axis has exponential scale. For the larger point set sizes we observe that we should make enough attempts to add a point, but not too many, to get the best efficiency.

\begin{figure*}[tb]
\centering
\includegraphics[width=0.72\textwidth]{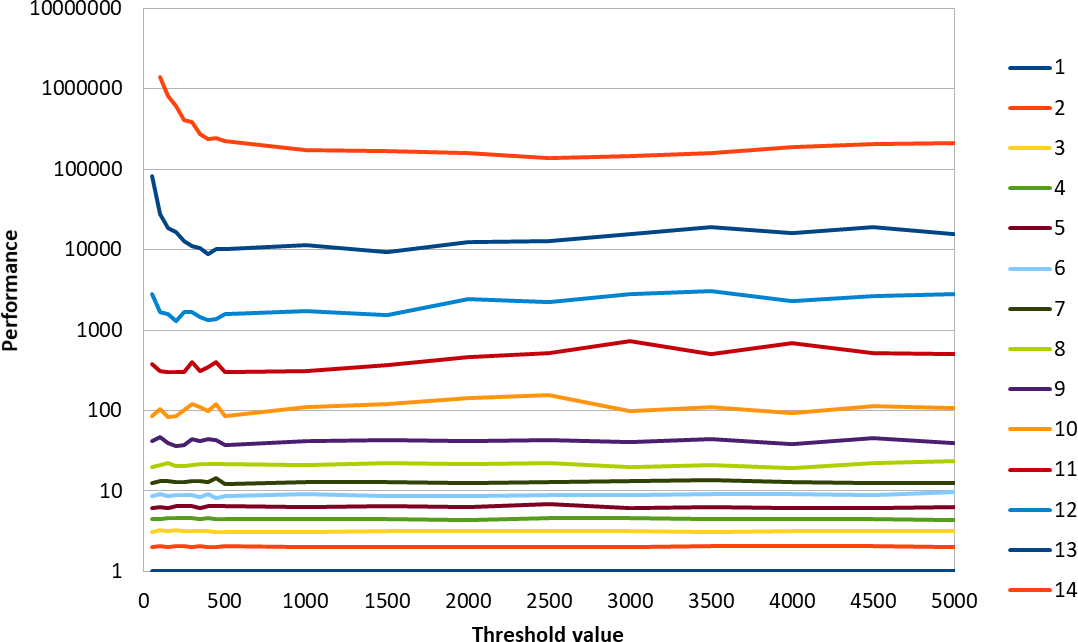}
\caption{Performance (total number of attempts to add a point to generate a complete point set) as a function of threshold choice, for different point set sizes. The threshold value represents the total number of attempts to add a random point before the generation of a point set is aborted and restarted.}
\label{fig:graph-performance}
\end{figure*}
\begin{figure*}
\centering
\includegraphics[width=0.60\textwidth]{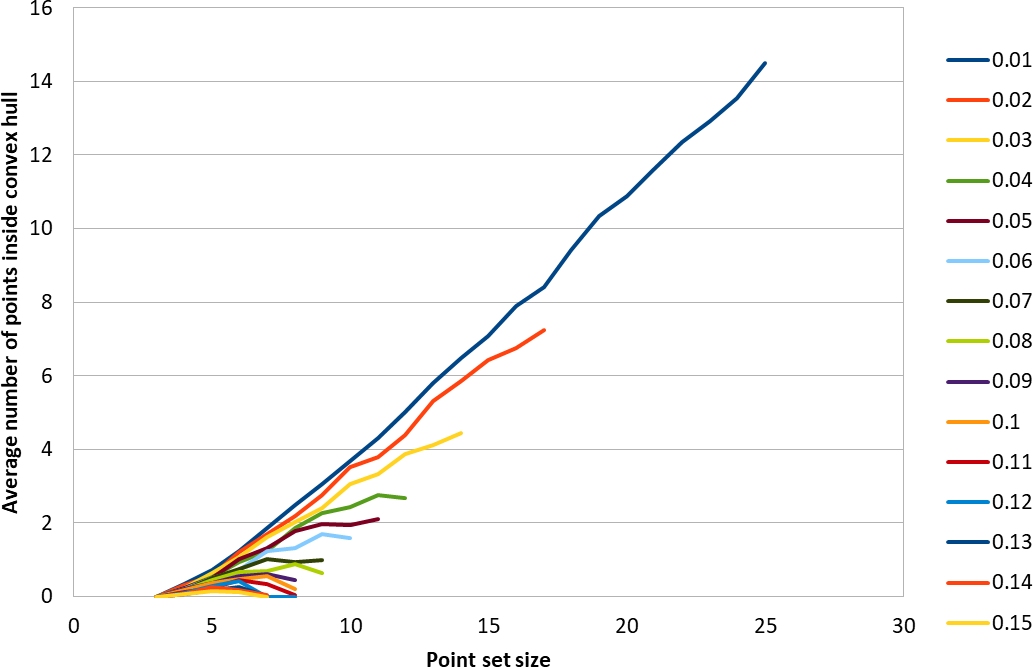}\hfill
\includegraphics[width=0.37\textwidth]{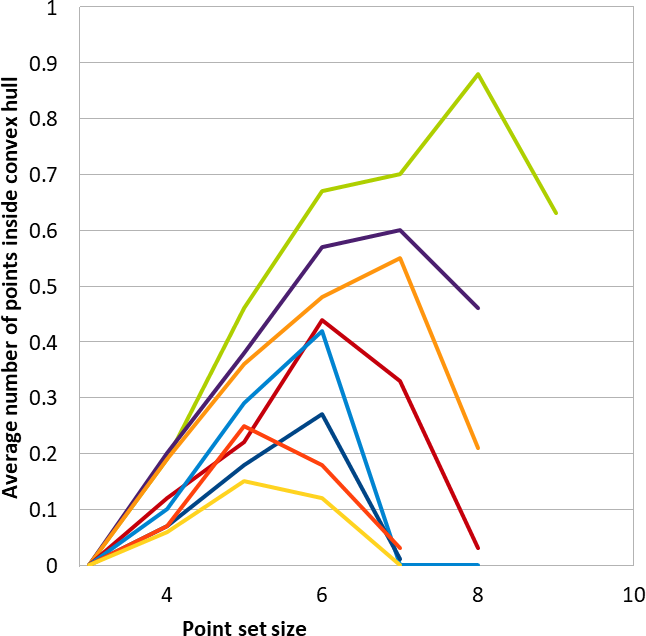}
\caption{Number of points inside the convex hull as a
function of the point set size, for different thresholds. Data points are averaged over $100$ point set instances that were generated. To the right, detail of the same figure.}
\label{fig:graph-ch1}
\end{figure*}

We also determined the number of points inside the convex hull for different point set sizes and different values of~$\delta$. We noticed a surprising phenomenon: the larger $\delta$, the fewer points are in the convex hull. 
This can be seen in Fig.~\ref{fig:graph-ch1}, right: for increasing $\delta$, fewer points tend to lie inside the convex hull. This happens especially when it gets difficult to generate larger point sets for a given $\delta$, and hence we cannot observe the behavior for larger point set values in Fig.~\ref{fig:graph-ch1}, left. It may be the case that a placement of points on the convex hull is a good placement if one wants to realize a large $\delta$. This suggestion is supported by theory on bold graph drawings~\cite{pach2011every}.
Fig.~\ref{fig:graph-sd1} shows the standard deviations over the $100$ point set instances. It also shows that if $\delta$ is chosen relatively large, fewer points will be inside the convex hull.

\begin{figure*}[tb]
\centering
\includegraphics[width=0.72\textwidth]{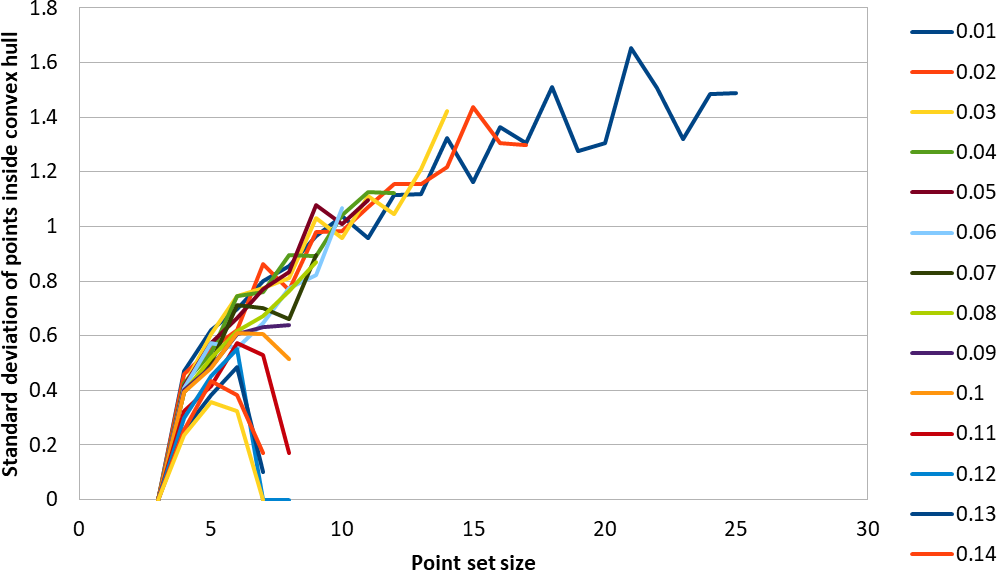}
\caption{Standard deviation of the number of points inside the convex hull as a function of the point set size, for different thresholds.}
\label{fig:graph-sd1}
\end{figure*}

The experiments show the following trade-off: puzzle instances with a good visual appearance (clear non-collinearity, large $\delta$) are harder to generate efficiently and show less diversity, indicated by the relatively large number of points on the convex hull.

\section{Equivalence of instances}
\label{sec:instanceequivalence}

We now return our attention to puzzle diversity. 
To generate diverse sets of puzzle instances, we ideally need a good measure of puzzle instance similarity (or its converse, distance).
We study this aspect in its weakest form, namely puzzle instance equivalence. 
The following definition essentially states that two puzzle instances should be
considered swap-equivalent if and only if any sequence of
corresponding swaps in both drawings gives the same
sets of intersecting edges in the drawings.
With slight abuse of notation we use the same symbol
for a vertex in a graph and the point in the plane
where it is drawn.

To define swap-equivalence we use a one-to-one matching $\mu$ between the vertices of the two
graphs and their drawings.
We use the subscripts $1$ and $2$ to refer to the two graphs, their drawings and their vertices. Furthermore, we use the same letter for graph vertices or points that are matched. So, vertex $u_1$ in graph $G_1$ is matched with vertex $u_2$ in~$G_2$. We use $\mu$ as an invertible function $\mu \colon V_1 \rightarrow V_2$. Thus, generally, we use $\mu(u_1) = u_2$ and $\mu^{-1}(u_2) = u_1$.

\begin{definition}
\label{def:swap-equivalent}
Drawings $D_1$ and $D_2$ of graphs $G_1$ and $G_2$
are \emph{swap-equivalent} if and only if there is
a one-to-one matching $\mu$ between their vertices such that:
\begin{description}
    \item[(i)]
    $(u_1,v_1)$ is an edge in $G_1$ if and only if
    $(u_2,v_2)$ is an edge in $G_2$, where $\mu(u_1) = u_2$ and $\mu(v_1) = v_2$;
    \item[(ii)]
    after any sequence of zero or more swaps of
    matched edges in both graphs,
    $(u_1,v_1)$ and $(w_1,x_1)$ intersect
    if and only if matched edges $(\mu(u_1),\mu(v_1))$ and $(\mu(w_1),\mu(x_1))$
    intersect.
\end{description}
\end{definition}

It is clear that the drawings $D_1$ and $D_2$ need to
have the same number of vertices and edges, otherwise
one-to-one matchings cannot exist.

\begin{lemma} 
\label{lem:iso}
If two drawings $D_1$ and $D_2$ of graphs $G_1$ and $G_2$ are swap-equivalent, then
their graphs are isomorphic.
\end{lemma}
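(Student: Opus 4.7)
The plan is essentially to read off the isomorphism directly from the definition of swap-equivalence. Swap-equivalence already comes packaged with a one-to-one correspondence $\mu \colon V_1 \to V_2$ between the two vertex sets, and condition (i) of Definition~\ref{def:swap-equivalent} is literally an edge-preservation statement in both directions. So the natural approach is to take $\mu$ itself as the candidate isomorphism and verify that it satisfies the standard definition of graph isomorphism.

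Concretely, I would proceed as follows. First, invoke swap-equivalence to obtain the bijection $\mu$ between $V_1$ and $V_2$; bijectivity is built into the ``one-to-one matching'' phrasing, and already forces $|V_1| = |V_2|$. Second, take an arbitrary pair $u_1, v_1 \in V_1$ and apply condition (i) to observe that $(u_1,v_1) \in E_1$ if and only if $(\mu(u_1), \mu(v_1)) \in E_2$. This is precisely the adjacency-preserving property required of a graph isomorphism, so $\mu$ witnesses $G_1 \cong G_2$.

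There is no serious obstacle here: condition (ii) of swap-equivalence, which involves the geometry of the drawings and sequences of swaps, is not even needed for the proof. The lemma is in that sense a sanity check on the definition, asserting that swap-equivalence refines graph isomorphism. The only thing worth spelling out explicitly, to keep the argument self-contained, is that the edge bijection above combined with the vertex bijection $\mu$ is the textbook definition of an isomorphism between the abstract graphs $G_1$ and $G_2$, independent of their embeddings $D_1$ and $D_2$.
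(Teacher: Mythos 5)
Your proposal is correct and matches the paper's proof, which simply observes that the claim follows directly from condition (i) of Definition~\ref{def:swap-equivalent}; you merely spell out the verification that $\mu$ is the witnessing isomorphism. Nothing further is needed.
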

\begin{proof}
This follows directly from Definition~\ref{def:swap-equivalent}(i).
\end{proof}

Beyond the topological equivalence of the graph indicated by the lemma above, we also need a form of geometric equivalence for the point set of the two puzzle instances. As it turns out, this corresponds to the order type, as formalized in the lemma below.
Consider two point sets $P_1$ and $P_2$ of $n$ points each. These point sets are combinatorially equivalent if a one-to-one mapping $\mu \colon P_1 \rightarrow P_2$ exists such that for any three points $u_1,v_1,w_1 \in P_1$, the sequence $u_1v_1w_1$ is a left
turn if and only if the sequence $\mu(u_1)\mu(v_1)\mu(w_1)$ of points from $P_2$ is a left turn.
The equivalence class thus obtained is called an \emph{order type}~\cite{aichholzer2002enumerating,aichholzer2007number}. 

\begin{lemma}
\label{lem:ot}
If two drawings $D_1$ and $D_2$ of connected, non-star graphs $G_1$ and $G_2$ are swap-equivalent, then
their point sets have the same order type.
\end{lemma}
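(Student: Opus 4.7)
My plan is to use swap sequences as geometric probes comparing segment crossings in $D_1$ and $D_2$, and then to lift crossing-equivalence to order-type equivalence.

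First, I would define $\phi \colon P_1 \to P_2$, the position bijection induced by $\mu$ on the initial drawings: $\phi(p)$ is the position of $\mu(v_1)$ where $v_1$ is the vertex sitting at $p$ in $D_1$. Connectedness of $G_1$ and $G_2$ combined with the proof of Lemma~\ref{lem:upperbound} guarantees that every vertex-to-position assignment is reachable by some swap sequence, and that matched sequences carry matched assignments across the two drawings.

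Next, since neither graph is a star, $G_1$ contains two vertex-disjoint edges $e = (a, b)$ and $f = (c, d)$, matched to edges $e'$ and $f'$ in $G_2$. For any four distinct positions $p, q, r, s \in P_1$, some swap sequence places $a, b, c, d$ at these positions, drawing $e$ as the segment $\overline{pq}$ and $f$ as $\overline{rs}$ in $D_1$, and correspondingly drawing $e'$ as $\overline{\phi(p)\phi(q)}$ and $f'$ as $\overline{\phi(r)\phi(s)}$ in $D_2$. Swap-equivalence forces their crossings to agree, so $\phi$ preserves the segment-crossing relation on every four-tuple of positions. Classically, for four points in general position the crossings among the three possible segment-partitions determine whether they are in convex position and, if so, which pair of segments are the diagonals; piecing this information across overlapping four-tuples should recover the convex-hull structure of $P_1$ and its interior-containment pattern.

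The main obstacle is the last step, from this ``convex-position equivalence'' to full orientation equivalence. Segment crossings are invariant under reflection of the plane, so the argument above a priori only determines the order type up to a global mirror image. Resolving this requires exploiting that $D_1$ and $D_2$ inhabit the same oriented plane, so that $\phi$ transports the ambient orientation consistently; I would need to verify carefully that this ambient consistency is enough to force $\phi$ to preserve left turns rather than merely convex-position data, and this is the delicate point where the proof has to do real work.
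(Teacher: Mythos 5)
Your proposal is essentially the paper's proof run in the contrapositive direction: the paper assumes the order types differ, extracts a quadruple that is in convex position in $P_1$ but not in $P_2$, uses connectedness (via the proof of Lemma~\ref{lem:upperbound}) to place two vertex-disjoint edges --- guaranteed by the non-star hypothesis --- onto that quadruple, and obtains a crossing in $D_1$ with no corresponding crossing in $D_2$, contradicting Definition~\ref{def:swap-equivalent}(ii). Your probe argument uses exactly these three ingredients, so up to the final step the two arguments coincide. (As a side remark, both you and the paper assert that a connected non-star graph has two vertex-disjoint edges; the triangle is the one exception, a degenerate case worth excluding explicitly.)

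The ``delicate point'' you flag at the end is real, but it cannot be resolved the way you hope. Take any drawing $D_1$ and let $D_2$ be its reflection in a line of the same plane, with the identity matching: reflections preserve every segment crossing, so $D_1$ and $D_2$ are swap-equivalent, yet for a generic point set no bijection makes all left turns correspond. Swap-equivalence is therefore genuinely blind to chirality, and no bookkeeping of the ``ambient orientation'' can recover it --- the information simply is not present in the crossing data. The correct conclusion is that the lemma holds with ``same order type'' read modulo reflection. Note that the paper's own proof makes the same silent leap when it infers, from ``different order type,'' the existence of a quadruple whose convex-position status differs; that inference fails precisely for mirror-image pairs. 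Since everything downstream (the characterization theorem and the $O(n^3)$ equivalence test) consumes only crossing information, which is reflection-invariant, the weaker reading suffices, and you should state your conclusion at that level rather than attempt the ``real work'' you describe, which is impossible.
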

\begin{proof}
Let $P_1$ and $P_2$ denote the point sets of the two graphs.
Suppose these point sets have a different order type. Then
there is a quadruple of points $u_1,v_1,w_1,x_1$ in $P_1$
with matching points $u_2 = \mu(u_1), v_2 = \mu(v_1), w_2 = \mu(w_1), x_2 = \mu(x_1)$ in $P_2$ such
that $u_1,v_1,w_1,x_1$ are in convex position and
$u_2,v_2,w_2,x_2$ are not. Assume without loss of
generality that $(u_1,v_1)$ intersects $(w_1,x_1)$.
Since $G_1$ is connected, we can realize any mapping
of graph vertices to points by Lemma~\ref{lem:upperbound}.
Since $G_1$ is connected and not a star graph, it has two edges with four distinct vertices. Consider a sequence of swaps that places these two edges on $(u_1,v_1)$ and $(w_1,x_1)$.
Then these edges intersect. However, since $u_2,v_2,w_2,x_2$ are not in convex position in $D_2$, the corresponding edges in $D_2$ do not intersect.
This contradicts Definition~\ref{def:swap-equivalent}(ii).
\end{proof}

\begin{theorem}
Two drawings $D_1$ and $D_2$ of connected, non-star
graphs $G_1$ and $G_2$ using point sets $P_1$ and $P_2$
are \emph{swap-equivalent} if and only if there is
a one-to-one matching between $P_1$ and $P_2$ such that:
\begin{itemize}
    \item
    point sets $P_1$ and $P_2$ have the same order type which respects the one-to-one matching;
    \item
    the graphs $G_1$ and $G_2$ are isomorphic with
    the given one-to-one matching of the points in $P_1$ and $P_2$ applied to the corresponding vertices in $G_1$ and $G_2$.
\end{itemize}
\end{theorem}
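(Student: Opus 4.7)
The plan is to prove both directions using the two preceding lemmas and one elementary observation about segment crossings. For the forward direction, suppose $D_1$ and $D_2$ are swap-equivalent, witnessed by the matching $\mu$ from Definition~\ref{def:swap-equivalent}. Lemma~\ref{lem:iso} applied to $\mu$ immediately yields that $G_1$ and $G_2$ are isomorphic with respect to the matching, which is the second bullet. Lemma~\ref{lem:ot} applied to $\mu$ (using connectedness and the non-star assumption) yields that $P_1$ and $P_2$ have the same order type with respect to the same matching, which is the first bullet. So this direction is immediate.

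For the backward direction, assume a one-to-one matching $\mu$ realizing both bullets. I need to verify clauses (i) and (ii) of Definition~\ref{def:swap-equivalent}. Clause (i) is exactly the second bullet, so only clause (ii) requires work. The key observation is a standard fact of combinatorial geometry: under general position, whether two straight-line segments with four distinct endpoints $a,b,c,d$ properly cross depends only on the order type of $\{a,b,c,d\}$, namely they cross if and only if these four points are in convex position with $\{a,b\}$ and $\{c,d\}$ alternating on the convex hull; adjacent edges sharing an endpoint never cross in their interior. Hence, whether two given edges cross is a function of the current placement of their (up to) four endpoints and the order type of the underlying point set.

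The main step is then to track, through an arbitrary sequence of matched swaps, the following \emph{$\mu$-consistency} invariant: if in $D_1$ the vertex $v$ currently sits at point $p$, then in $D_2$ the matched vertex $\mu(v)$ sits at the matched point $\mu(p)$. By the two bullets the invariant holds initially. For the inductive step, one matched swap transposes the placements of $u_1,v_1$ in $D_1$ and the placements of $\mu(u_1),\mu(v_1)$ in $D_2$; these two transpositions are mirror images under $\mu$, so the invariant is preserved. Combining the invariant with the order-type equivalence of the first bullet and the crossing observation from the previous paragraph shows that, at every moment in the swap sequence, any two edges of $G_1$ intersect in $D_1$ if and only if their matched counterparts intersect in $D_2$. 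This is exactly clause (ii).

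I do not expect a real obstacle here: the forward direction is simply a citation of the two prior lemmas, and the backward direction reduces to the textbook fact that segment crossing is an order-type invariant, combined with an easy bookkeeping induction that matched swaps preserve $\mu$-consistency in both drawings simultaneously. The only point that warrants care is to be explicit that $\mu$ acts both on vertices (for the graph isomorphism) and on their host points (for the order type), and that these two actions are the same function after identifying each vertex with its location, which is precisely what the theorem's phrasing \emph{``respects the one-to-one matching''} demands.
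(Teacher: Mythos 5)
Your proposal is correct and follows essentially the same route as the paper: the forward direction cites Lemmas~\ref{lem:iso} and~\ref{lem:ot} applied to the witnessing matching $\mu$, and the backward direction rests on the fact that whether two segments cross is determined by the order type of their four endpoints, hence is preserved under the matching throughout any sequence of corresponding swaps. Your version is in fact slightly cleaner in two places: you note explicitly that the single matching $\mu$ from swap-equivalence simultaneously witnesses both bullets (where the paper adds a separate contradiction argument for this), and you make explicit the bookkeeping invariant that matched swaps keep matched vertices on matched points, which the paper leaves implicit.
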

\begin{proof}
Assume two drawings $D_1$ and $D_2$ are swap-equivalent.
Then Lemma~\ref{lem:iso} shows that their graphs are isomorphic.
Furthermore, Lemma~\ref{lem:ot} shows that their point
sets must have the same order type. 
Now assume that the order types are the same and the graphs are isomorphic, but there is no one-to-one matching that simultaneously witnesses the same order type and graph isomorphy.
Consider any one-to-one matching for graph isomorphy. Then there is a quadruple of points that are in convex position in $G_1$ whose matched points are not in convex position in $G_2$, or vice versa. The same argument as the one
used to prove Lemma~\ref{lem:ot} shows that the intersections of edges are not the same in the two drawings after some sequence of swaps, contradicting swap-equivalence.

Next, assume that two drawings satisfy the two conditions of the theorem. 
Then there is a one-to-one matching between
the vertices that respects the order types and that
is a witness for graph isomorphism at the same time.
Graph isomorphism implies (i) of Definition~\ref{def:swap-equivalent}, and having the same
order type implies that the same pairs of edges intersect
in the complete graph. Hence this is also true in any subgraph.
\end{proof}

The theorem above implies an efficient way to
test whether two puzzle instances are equivalent.
We first identify the at most $n$ one-to-one matchings
for the order type, and then check whether this
matching also realizes graph isomorphism.
Generating and testing the up to $n$ matchings takes $O(n^3)$ time~\cite{goodman1984semispaces}, and testing isomorphism for a given
matching takes time linear in the size of the graph.
Hence, swap-equivalence of two drawings with $n$ vertices can be tested in $O(n^3)$ time.

\section{Conclusions}

We introduced a new graph planarity puzzle game called {\sc Swap Planarity} and analyzed various properties, including the algorithmic complexity of solving instances.
Any instance that can be solved, is solved in $O(n^2)$ swaps. However, deciding if an instance can be solved is NP-complete. When the graph is a tree, the instance can always be solved, but deciding if $k$ swaps are sufficient is again NP-complete.
We presented a method to generate instances effectively while paying attention to visual clarity, diversity, and absence of accidental structure. Our implementation shows that generation works well, but has a trade-off between a good visual clarity on the one hand and diversity and efficient generation on the other.

Visual clarity was defined using a new, simple condition on point sets called $\delta$-general position. The experiments showed an interesting phenomenon, namely that if $\delta$ is fairly large for the available space and the number of points, the solution tends to be a set of points in convex position (and no points in the interior of the convex hull).
It would be interesting to explore this relationship further.

We think that the new, swap-based graph planarity puzzle game is a nice, elegant addition to the collection of abstract puzzle games. The puzzle is NP-hard, the number of crossings may need to be increased to reach a solution, and even small instances are not so easy to solve. User studies are needed to analyze the fun and difficulty of the game for players.

\section*{Acknowledgments} 
Part of this work was performed at the Dutch-Japanese Bilateral Seminar on Kinetic Geometric Networks. We thank the organizers and participants of the workshop for providing a fun and stimulating research environment. A.v.R. was supported by JST ERATO Grant Number JPMJER1201, Japan. M.v.K. was supported by the Netherlands Organisation for Scientific Research on grant no.~612.001.651. W.M. was supported by Netherlands eScience Center (NLeSC) on grant no. 027.015.G02.

\bibliographystyle{plain}
\bibliography{references}

\end{document}